\documentclass[runningheads]{llncs}
\usepackage{amsmath,amssymb}
\usepackage{bm}
\usepackage{quantikz}
\usepackage{braket}
\usepackage{xcolor}
\usepackage{booktabs}

\renewenvironment{example}
  {\vspace{0.25cm} \par\noindent\textbf{Example.}\itshape}
  {\par}

\newcommand{\CNOT}{\mathrm{CNOT}}
\newcommand{\calO}{\mathcal{O}}

\title{Constrained Quantum Optimization \\ meets Model Reduction}

\author{Max Tschaikowski\inst{1} \and Andrea Vandin\inst{2,3}}

\institute{
Sapienza University of Rome, Italy
\and
Scuola Superiore Sant'Anna, Italy 
\and
Technical University of Denmark
}

\begin{document}

\maketitle

\begin{abstract}
Quantum optimization algorithms promise advantages for difficult problems but are costly to simulate and analyze on classical machines. Recently, constrained quantum optimization has been investigated through the lens of Quantum Zeno dynamics, an approach which constrains the search to a subspace by means of quantum measurements. Exploiting that quantum measurements are projections, we propose a model reduction approach and show that simulations can be conducted in a lower-dimensional space. As possible applications, we demonstrate exponential state-space reduction of constrained quantum optimization in case of random 3-SAT and an agent coordination problem over graphs.
\end{abstract}

\section{Introduction}

Formal verification and analysis of collaborative adaptive systems require models that make safety and coordination constraints explicit while remaining amenable to scalable analysis. A central technique in theoretical computer science is to construct reduced models that preserve behavior relevant to the properties of interest, for instance via simulation or bisimulation-based quotients~\cite{Milner1989,BaierKatoen2008}. In this paper, we show that a closely related perspective applies to constrained quantum optimization: feasibility constraints induce an exact reduced quantum model that can be simulated more efficiently.

Quantum computing has gained remarkable momentum over the last decade and holds the promise of addressing computationally difficult problems such as combinatorial optimization. Among the most prominent approaches is the Quantum Approximate Optimization Algorithm (QAOA)~\cite{farhi2014quantum,Herman2023}, which encodes a classical optimization problem into a problem Hamiltonian and alternates it with a mixer Hamiltonian to steer the system toward high-quality solutions. In many applications, however, one is interested not only in optimal solutions but in feasible ones that satisfy additional constraints. This is particularly natural in collaborative systems, where feasibility encodes coordination requirements among agents. We follow this paradigm by studying constrained quantum optimization in case of a agent-coordination problem over graphs, viewed as a formal model of a multi-agent system.

In quantum computing, feasibility can be enforced either at the level of circuit design, for instance via constraint-preserving mixers~\cite{Fuchs2022}, or dynamically via quantum Zeno dynamics, where repeated projections confine the evolution to a safe subspace~\cite{FacchiPascazio2002,FacchiReview2007,Herman2023}. While the latter provides a flexible and declarative mechanism for enforcing constraints, classical simulation of quantum optimization remains costly because the evolution is represented in a space whose dimension grows exponentially with the number of qubits.

Our main observation is that the projector defining the Zeno subspace also induces an exact reduced model. Rather than representing the constrained dynamics in the full space, one can equivalently describe the evolution by working in a lower-dimensional space. This yields a compressed representation of the dynamics that captures exactly the behavior of the constrained system. As a result, quantum Zeno dynamics can be understood as a principled model-reduction technique for constrained quantum systems, enabling analysis and simulation to be carried out on a significantly smaller state space.

This reduction admits a natural interpretation in terms of behavioral equivalence. In concurrency theory, bisimulation identifies states that are indistinguishable with respect to observable behavior and yields quotient systems that preserve properties of interest~\cite{Milner1989,BaierKatoen2008}. While quantum dynamics is continuous rather than transition-based, our compression plays an analogous role: it removes behavior outside the safe subspace as irrelevant for the constrained evolution and yields an exact quotient of the dynamics within the safe region. In this sense, our construction can be viewed as a linear-algebraic analogue of bisimulation-based model reduction.

We provide (i) a formal reduction result showing that (Zeno-)constrained dynamics admits an exact compressed representation, and (ii) an application demonstrating exponential state-space reductions in case of random 3-SAT and a agent-coordination problem over random graphs. This places constrained quantum optimization within a broader verification-oriented perspective, where feasibility constraints induce invariant subspaces and exact reduced models, closely paralleling abstraction techniques in the analysis of collaborative adaptive systems.

\emph{Related work.} Results on quantum Zeno dynamics are well established in mathematical physics, where frequent projections lead to an effective evolution~\cite{FacchiPascazio2002,FacchiReview2007}, and product-formula constructions make this connection precise~\cite{Ichinose2006,BurgarthFacchi2020}. Recent work has explored Zeno dynamics for constrained quantum optimization~\cite{Herman2023}, alongside circuit-level approaches based on specialized mixers~\cite{Fuchs2022}. Our contribution is complementary: we make explicit the induced reduced dynamics and exploit it for classical simulation. In that respect, the present work relates to model reduction approaches in performance modeling~\cite{DBLP:journals/tcs/TschaikowskiT14,DBLP:journals/pe/TschaikowskiT17,DBLP:conf/popl/CardelliTTV16,DBLP:journals/tcad/0001MTA16}, computational biology~\cite{DBLP:conf/gecco/TognazziTTV17,DBLP:journals/bioinformatics/CardelliPTTVW21,Shizuka22} and engineering~\cite{DBLP:journals/tac/PappasS02,antoulas,DBLP:conf/cdc/TognazziTTV18,DBLP:journals/tac/CardelliGLTTV23}. Likewise, it closely aligns to quantum circuit reduction~\cite{qTACAS,DBLP:journals/tqc/BurgholzerJLTTW25}. However, while conceptually related, the foregoing approaches are different because they exploit invariant subspaces of the model, while we do not make such an assumption. Instead, we assume that the modeler is only interested in a certain subspace and thus is willing to lose information that lies outside of it, for instance because states outside the subspace are considered to be unsafe.

\paragraph{Synopsis.} Section~2 recalls the quantum-computing background needed in the paper, including quantum circuits, Schr\"odinger dynamics, and the formulation of QAOA for SAT. Section~3 introduces quantum Zeno reduction and proves that projection onto a feasible subspace induces an exact reduced dynamics that can be simulated in lower dimension. 
Section~4 evaluates the approach on constrained 3-SAT instances and on a agent-coordination benchmark encoded over graphs, demonstrating in some cases exponential state-space reductions. Section~5 concludes and discusses directions for future work.

\section{Background}\label{sec:quantum-systems}

This section provides a brief introduction to quantum computation that is needed to describe quantum optimization. To allow for a gentle discussion, we first present digital circuits and move afterwards to quantum circuits, where classical bit strings are replaced by unit vectors and classical gates by quantum gates.

\subsection{Digital Circuits}

Using the language of linear algebra, a classical bit is represented by a basis vector
\[
|0\rangle := \begin{pmatrix}1\\0\end{pmatrix},
\qquad
|1\rangle := \begin{pmatrix}0\\1\end{pmatrix}.
\]
For instance, the NOT gate can be represented as a matrix and visually as a circuit:
\[
X = \begin{pmatrix}0 & 1\\ 1 & 0\end{pmatrix},
\qquad
X|0\rangle=|1\rangle,\;\;
X|1\rangle=|0\rangle,
\qquad
\begin{quantikz}
\lstick{$|x\rangle$} & \gate{X} & \rstick{$|1-x\rangle$} \qw
\end{quantikz}
\]
Sequential composition corresponds to matrix multiplication and horizontal circuit composition. For instance,
\[
XX=I ,
\qquad 
\begin{quantikz}
\lstick{$|x\rangle$} & \gate{X} & \gate{X} & \rstick{$|x\rangle$} \qw
\end{quantikz}
\]
Instead, parallel composition corresponds to the tensor product. For two bits,
\[
|x_1x_0\rangle = |x_1\rangle  |x_0\rangle = |x_1\rangle \otimes |x_0\rangle.
\]
If $U$ and $V$ are single-bit gates, then their parallel composition satisfies
\[
(U\otimes V)(|x\rangle|y\rangle)
=
(U|x\rangle)\otimes(V|y\rangle).
\]
For example, $(X\otimes X)|x_1x_0\rangle = (X|x_1\rangle)\otimes(X|x_0\rangle) = |1-x_1\rangle  |1-x_0\rangle $ and can be depicted as
\[
\begin{quantikz}
\lstick{$|x_1\rangle$} & \gate{X} & \qw \\
\lstick{$|x_0\rangle$} & \gate{X} & \qw
\end{quantikz}
\]

To create more complex computations, one needs to use multi-bit gates. A famous example is the controlled NOT gate (CNOT):
\[
\CNOT\bigl(\ket{x_1}\ket{x_0}\bigr)
=
\ket{x_1}\ket{x_1\oplus x_0},
\]
where \(\oplus\) denotes XOR. The first bit is copied as a control, and the second bit is flipped exactly when the first one is \(1\). 

\begin{example}
Parallel composition of a CNOT with an \(X\) gate on a third bit, followed by an additional \(X\) gate on the second (target) bit, can be depicted as
\begin{center}
\begin{quantikz}
\lstick{$\ket{x_2}$} & \ctrl{1} & \qw      & \qw      & \rstick{$\ket{x_2}$}\\
\lstick{$\ket{x_1}$} & \targ{}  & \gate{X} & \qw      & \rstick{$\ket{1-(x_2\oplus x_1)}$}\\
\lstick{$\ket{x_0}$} & \gate{X} & \qw      & \qw      & \rstick{$\ket{1-x_0}$}
\end{quantikz}
\end{center}
Algebraically, the circuit can be written as
\[
\bigl((I \otimes X \otimes I)\circ(\CNOT \otimes X)\bigr)
\bigl(\ket{x_2}\ket{x_1}\ket{x_0}\bigr)
=
\ket{x_2}\ket{1-(x_2\oplus x_1)}\ket{1-x_0} ,
\]
where $\circ$ emphasizes sequential composition and is often suppressed because it corresponds to matrix multiplication. Notice that the first layer $\CNOT \otimes X$ is applied first and appears therefore in the above equation to the right of the second layer $I \otimes X \otimes I$, which is applied to the result of the first layer.
\end{example}


\subsection{Quantum Circuits}

Unlike a classic computer whose state is given by a bit string $\ket{x}$, a quantum computer can attain several bit strings simultaneously. The linear algebra perspective is helpful to formalize what it means to attain several bit strings simultaneously. Indeed, in the case of digital circuits, bits $\ket{0}$ and $\ket{1}$ were expressed by vectors $\binom{1}{0}$ and $\binom{0}{1}$, respectively. Intuitively, it is therefore natural to interpret the vector 
\begin{align*}
    \binom{1}{1} & = \binom{1}{0} + \binom{0}{1} = \ket{0} + \ket{1}
\end{align*}
as the quantum state that attains bits $\ket{0}$ and $\ket{1}$ at the same time. By the laws of quantum mechanics, the state of a qubit $\ket{\psi}$ is actually a sum with complex-valued coefficients, i.e., $\ket{\psi} = \alpha_0 \ket{0} + \alpha_1 \ket{1}$. When measured, a qubit attains the value $\ket{x}$ with probability $|\alpha_x|^2$. The constraint of $|\alpha_x|^2$ being a probability implies that $\sum_x |\alpha_x|^2 = 1$.

\begin{example}
A central gate in quantum computing realizing quantum parallelism is the Hadamard gate
\[
H=
\frac{1}{\sqrt{2}}
\begin{pmatrix}
1 & 1\\
1 & -1
\end{pmatrix}
\]
Notice that $H\ket{0} = \frac{1}{\sqrt{2}}(|0\rangle+|1\rangle)$ and that $(\frac{1}{\sqrt{2}})^2 = \frac{1}{2}$. That is, measuring the quantum state $H \ket{0}$ returns $\ket{0}$ or $\ket{1}$ with probability $\frac{1}{2}$.
\end{example}

The above discussion yields the following.

\begin{definition}
A quantum state on $n$ qubits is a unit vector in $\mathbb{C}^{2^n}$. It has the form
\[
|\psi\rangle
=
\sum_{x\in\{0,1\}^n}
\alpha_x |x\rangle,
\qquad
\sum_x |\alpha_x|^2 = 1.
\]
The coefficients $\alpha_x \in \mathbb{C}$ are called amplitudes. Measuring $|\psi\rangle$ in the computational basis yields outcome $x$ with probability $|\alpha_x|^2$. 
\end{definition}

A quantum circuit must map quantum states to quantum states. Since states are unit vectors, admissible transformations must preserve the Euclidean norm. The linear maps with this property are precisely the unitary matrices.

\begin{definition}
A quantum circuit on $n$ qubits is a unitary matrix
\[
U \in \mathbb{C}^{2^n\times 2^n},
\qquad
U^\dagger U = I ,
\]
where $\dagger$ denotes matrix transposition with complex conjugation. Sequential and parallel composition are defined, respectively, via matrix multiplication and the tensor product $\otimes$. 
\end{definition}

\begin{example}
Note that that quantum gates and circuits do not only output quantum states, they also take these as inputs. The Hadamard gate could be for instance applied to the quantum states 
\[
\ket{+} = \tfrac{1}{\sqrt{2}} \ket{0} + \tfrac{1}{\sqrt{2}}\ket{1} \quad \text{and} \quad 
\ket{-} = \tfrac{1}{\sqrt{2}} \ket{0} - \tfrac{1}{\sqrt{2}}\ket{1} .
\]
In case of $\ket{+}$, e.g., we obtain 
\begin{align*}
H\!\big(\tfrac{1}{\sqrt{2}} \ket{0}+ \tfrac{1}{\sqrt{2}}\ket{1} \big) 
&=
\tfrac{1}{\sqrt{2}} H\ket{0} + \tfrac{1}{\sqrt{2}} H\ket{1}
=
\tfrac12 (\ket{0}+\ket{1})
+
\tfrac12 (\ket{0}-\ket{1})
=
\ket{0}.
\end{align*}
Applying two Hadamard gates in parallel, instead, gives
\begin{align*}
H^{\otimes 2}|0\rangle^{\otimes 2} & = (H \otimes H) (|0\rangle \otimes |0\rangle) \\
&= (H|0\rangle)\otimes(H|0\rangle) \\
&= \Big(\frac{1}{\sqrt{2}}(|0\rangle+|1\rangle)\Big)
   \otimes
   \Big(\frac{1}{\sqrt{2}}(|0\rangle+|1\rangle)\Big) \\
&= \frac{1}{2}
   \big(
   |0\rangle \otimes |0\rangle + |0\rangle \otimes |1\rangle + |1\rangle \otimes |0\rangle + |1\rangle \otimes |1\rangle
   \big) \\
&= \frac{1}{2}
   \big(
   |00\rangle + |01\rangle + |10\rangle + |11\rangle
   \big).
\end{align*}
More generally, $H^{\otimes n}|0\rangle^{\otimes n} = \frac{1}{\sqrt{2^n}} \sum_{x\in\{0,1\}^n} |x\rangle$. That is, Hadamard gates can be used to realize uniform superpositions of all quantum base states. 
\end{example}

\subsection{Schr\"odinger's Equation}

From a physical perspective, it is not the quantum circuit that is directly implemented in quantum hardware. Indeed, the results of quantum circuits are solutions of Schr\"odinger's equation 
\begin{align}\label{eq:schroedinger}
\partial_t |\psi_t \rangle = -i H_U |\psi_t \rangle 
\end{align}
where $H_U$ is a Hermitian matrix. That is, for each quantum gate $U$, there exists a Hermitian matrix $H_U$ (the so-called  Hamiltonian) and an execution time $s$ such that $U \ket{\psi_0} = \ket{\psi_s}$, with $\ket{\psi_t}$ satisfying~(\ref{eq:schroedinger}) for all $0 \leq t \leq s$, and $\ket{\psi_0}$ being the input to $U$. More generally, quantum circuits are described by time-dependent control fields (e.g., microwave pulses or laser interactions) that realize a time-dependent Hamiltonian $H_U(t)$, where time-dependency accounts for different layers of the quantum circuit $U$.

Being linear, we recall that the solution of~(\ref{eq:schroedinger}) at $t$ is given by $\exp(-i t H_U)$, where the matrix exponential is defined as
\[
\exp(M) = \sum_{k = 0}^\infty \frac{1}{k!} M^k
\]

\begin{example}
Assume we wish to realize the Hadamard gate $H$. To this end, we consider the Hamiltonian $H_U = (I - H) / 2$, where $I$ is the identity. Then, picking as execution time $s = \pi / 2$, it can be shown that $H \ket{\psi_0} = \ket{\psi_s}$, where $\ket{\psi_t}$ obeys the Schrödinger equation for Hamiltonian $H_U$. Specifically, since $I$ commutes with $H$, we may split the exponential:
\begin{align*}
\exp({-i s H_U})
&= \exp\Big({-i\frac{\pi}{2}(I-H)}\Big)
   = \exp\Big({-i\frac{\pi}{2}I}\Big) \exp\Big({i\frac{\pi}{2}H}\Big).
\end{align*}
Using $H^2 = I$, we infer that
\begin{align*}
\exp\Big({i\frac{\pi}{2}H}\Big)
&= \cos\!\left(\frac{\pi}{2}\right) I
   + i\sin\!\left(\frac{\pi}{2}\right) H
   = iH.
\end{align*}
With $\exp({-i\frac{\pi}{2}I}) = -i\,I$, we obtain overall $\exp(-i \frac{\pi}{2} H_U) = (-iI)(iH) = H$.
\end{example}

\subsection{Quantum Optimization}

Quantum approximate optimization algorithm (QAOA) is conviniently described by Schrödinger's equation where the Hamiltonian $H$ varies in time, alternating between a problem Hamiltonian $H_P$ and a begin Hamiltonian $H_B$. 


\begin{definition}[QAOA~\cite{farhi2014quantum}] \label{def:qaoa}
For a problem Hamiltonian $H_P$ and a begin Hamiltonian $H_B$, fix the unitary matrices
\begin{align*}
U_B(\delta) & = \exp(-i \delta H_B) & \text{and} &  &
U_P(\delta) & = \exp(-i \delta H_P)
\end{align*}
where $\delta > 0$ is a sufficiently small time step and $\exp(A)$ is the matrix exponential. For a  sequence of natural numbers $(k_i,l_i)_{i=1}^\kappa$ of length $\kappa \geq 1$, we define 
\begin{equation}\label{eq:qaoa:expanded}
        \ket{w_\kappa} = U_B(\delta)^{k_\kappa} U_P(\delta)^{l_\kappa} \cdot \ldots \cdot U_B(\delta)^{k_1} U_P(\delta)^{l_1} \ket{\psi}
\end{equation}    
The QAOA with $\kappa \geq 1$ stages is then given by $\max \{ \bra{w_\kappa} H_P \ket{w_\kappa} \mid (k_i,l_i)_{i=1}^\kappa \}$. 
\end{definition}

As anticipated,~(\ref{eq:qaoa:expanded}) corresponds to the evaluation of Schrödinger's equation in case where the Hamiltonian $H$ is alternating between $H_P$ and $H_B$. Indeed, $U_B(\delta) \ket{\psi_0} = \ket{\psi_\delta}$, where $\ket{\psi_t}$ solves $\partial_t \ket{\psi_t} = -i H_B \ket{\psi_t}$ on interval $[0;\delta]$. A product $U_P(\delta') U_B(\delta) \ket{\psi_0}$, instead, is obtained by solving $\partial_t \ket{\psi_t} = -i H_P \ket{\psi_t}$ on the interval $[\delta ; \delta + \delta']$.

While the problem Hamiltonian $H_P$ depends on the optimization problem at hand, the choice of the begin Hamiltonian $H_B$ is informed by the so-called adiabatic theorem, a result that identifies conditions under which QAOA returns a global optimum. A common heuristic is to pick $H_B$ such that $H_B$ and $H_P$ do not diagonalize over a common basis~\cite{farhi2000quantum,farhi2014quantum} and to assume without loss of generality that $\ket{\psi_0} =  \sum_x \ket{x} / \sqrt{N}$ is the unique maximal eigenvector of $H_B$.

We next discuss QAOA in the case where it is applied to solve SAT, an NP-complete problem~\cite{sipser1996introduction}. 

\begin{definition}[Problem Hamiltonian SAT]\label{def:ham:sat}
For a boolean formula $\phi = \bigwedge_{i = 1}^M C_i$, where $C_i$ is a clause over $n$ boolean variables, the problem Hamiltonian is given by $H_P = \sum_i H_i$, where
        \[
        H_i \ket{x} =
        \begin{cases}
        \ket{x} & , \ C_i(x) \text{ is true} \\
        0 & , \ C_i(x) \text{ is false} \\
        \end{cases}
        \]
        for any $x \in \{0,1\}^n$ representing a boolean assignment.
\end{definition}

Following this definition, it can be shown that the QAOA $\bra{w_\kappa} H_P \ket{w_\kappa}$ from Definition~\ref{def:qaoa} corresponds to the average number of satisfied clauses. Under certain assumptions, QAOA finds a global optimum in case of a sufficiently large value of $\kappa$~\cite{farhi2000quantum,farhi2014quantum}.

\begin{example}
Consider the boolean formula $\Phi = (\neg x_0 \vee \neg x_1)\wedge(x_0\vee x_1)$. The first clause $C_1(x_1 x_0) = (\neg x_0 \vee \neg x_1)$ is true whenever $x_1 = 0$ or $x_0 = 0$; instead, the second clause is true whenever $x_1 = 1$ or $x_0 = 1$. Hence, $\Phi$ encodes $x_1 \texttt{ xor } x_2$, i.e., satisfying both clauses is only possible if $x_0 \neq x_1$. Overall, we obtain 
\[
H_P|00\rangle = 1|00\rangle, \ H_P|01\rangle = 2|01\rangle, \ H_P|10\rangle = 2|10\rangle, \ H_P|11\rangle = 1|11\rangle .
\]
Recalling that $|x \rangle = e_{x + 1}$, where $e_i$ is the $i$-th standard base vector, while $x$ is the binary representation of an integer, the above line translates to
\[
H_P e_1  = 1 e_1, \ H_P e_2 = 2 e_2, \ H_P e_3 = 2 e_3, \ H_P e_4 = 1 e_4 .
\]
Being diagonal, $H_P$ diagonalizes over the standard basis, i.e., the one induced by $|0\rangle$ and $|1\rangle$. Recalling that $H_P$ and $H_B$ should not diagonalize over the same basis, a common choice for $H_B$ is $X \otimes I + I \otimes X = X_1 + X_0$, giving us
\begin{align}\label{eq:HB:small}
H_P & = 
\begin{pmatrix}
1 & 0   & 0   & 0 \\
0   & 2 & 0   & 0 \\
0   & 0   & 2 & 0 \\
0   & 0   & 0   & 1
\end{pmatrix} 
\quad
\text{and}
\quad
H_B = 
\begin{pmatrix}
0 & 1 & 1 & 0 \\
1 & 0 & 0 & 1 \\
1 & 0 & 0 & 1 \\
0 & 1 & 1 & 0
\end{pmatrix}
\end{align}
We notice that $H_B$ does not diagonalize over the standard basis. Indeed, it can be shown that $H_B$ diagonalizes over the basis induced by $|+\rangle$ and $|-\rangle$. For instance, since $X |+\rangle = |+\rangle$ and $X |-\rangle = (-1) | - \rangle$, one readily infers that 
\[
H_B |- +\rangle = (X \otimes I + I \otimes X) |-\rangle \otimes |+\rangle = ((-1)|- \rangle \otimes |+\rangle) +  (|- \rangle \otimes |+\rangle) = 0
\]
\end{example}

\section{Quantum Zeno Reduction}

We next introduce quantum Zeno reduction. To facilitate the discussion, we shall use the SAT example from before.

\begin{example}
Recall our SAT example $\Phi = (\neg x_0 \vee \neg x_1) \land (x_0\vee x_1) = C_1 \land C_2$. In order to find a satisfying assignment of the entire formula $\Phi$, we restrict the search space to the boolean assignments satisfying the first clause that satisfy $C_1$. Put different, we consider as (safe) Zeno space
\begin{align*}
Z & = \mathrm{span} \{ | x \rangle \mid C_1(x) = 1 \} \\
& = \{ \alpha_{00} |00\rangle + \alpha_{01} |01\rangle + \alpha_{10} |10\rangle \mid \alpha_{00}, \alpha_{01}, \alpha_{10} \in \mathbb{C} \} \\
& = \{ \alpha_{1} e_1 + \alpha_{2} e_2 + \alpha_{e} e_3 \mid \alpha_{1}, \alpha_{2}, \alpha_{3} \in \mathbb{C} \} ,
\end{align*}
where $\mathrm{span}(A)$ denotes the span of a set of vectors, i.e., all finite linear combinations $\sum_i \alpha_i a_i$ where $a_i \in A$.   
\end{example}

Quantum Zeno projects Schrödinger's equation~(\ref{eq:schroedinger}) onto a subspace of the entire space, e.g., $Z$ in the case of our SAT example. 

\begin{definition}
Given a subspace of interest $Z$, quantum Zeno replaces the original Hamiltonian $H_U$ in Schrödinger's equation~(\ref{eq:schroedinger}) with the projected Hamiltonian $H_Z = P_Z H_U P_Z$, where $P_Z$ is the projection onto $Z$. Specifically, if $S$ is an orthonormal basis of subspace $Z$, then $P_Z = S S^\dagger$.
\end{definition}

\begin{example}
Consider matrix $S = (e_1, e_2, e_3)$ whose columns form an orthonormal basis of the safe space in the last example. Hence, the projection $P_Z$ of our SAT example is given by $P_Z = S S^\dagger = (e_1, e_2, e_3, 0)$, where $0$ denotes the zero column with four coordinates. As expected, $P_Z w$ sets the fourth coordinate in $w$ to zero because $e_4 = |11\rangle$ does not satisfy the first clause of $\Phi$. Recall that this was the original idea behind quantum Zeno: focus on a subspace that is of interest and remove everything else. 

Since $H$ alternates between $H_P$ and $H_B$ in the case of QAOA, we obtain two projected Hamiltonians:
\[
H_{P,Z} = P_Z H_P P_Z = 
\begin{pmatrix}
1 & 0   & 0   & 0 \\
0   & 2 & 0   & 0 \\
0   & 0   & 2 & 0 \\
0   & 0   & 0   & 0
\end{pmatrix} 
\quad
\text{and}
\quad
H_{B,Z} = P_Z H_B P_Z = 
\begin{pmatrix}
0 & 1 & 1 & 0 \\
1 & 0 & 0 & 0 \\
1 & 0 & 0 & 0 \\
0 & 0 & 0 & 0
\end{pmatrix}
\]
The constrained QAOA described by the Zeno subspace is thus given by
\begin{align*}
U_{B,Z}(\delta) & = \exp(-i \delta H_{ZB}) & \text{and} &  &
U_{P,Z}(\delta) & = \exp(-i \delta H_{ZP})
\end{align*}
\end{example}

In the above example, projecting onto a subspace led to zero columns and zero rows in Hamiltonians. This is because the calculations are confined to a subspace of smaller dimension. Rather than working in a subspace of a bigger state space, we transfer the entire calculation to a smaller state space. This makes it possible to simulate quantum algorithms more efficiently.

\begin{theorem}\label{thm:red:zeno}
Let $P_Z = S S^\dagger$ with $S^\dagger S = I$ describe the projection onto the safe subspace $Z$ and let $H_Z = P_Z H P_Z$ and $\widehat{H}_Z = S^\dagger H S$. Then, if $|\widehat{\psi}_0\rangle = S^\dagger |\psi_0\rangle$, the reduced and the original Zeno equation
\[
\partial_t |{\psi}_t \rangle = -i H_Z |\psi_t \rangle \quad \text{and} \quad
\partial_t |\widehat{\psi}_t \rangle = -i \widehat{H}_Z |\widehat{\psi}_t \rangle 
\]
satisfy $\ket{\psi_t} - r = S \ket{\widehat{\psi}_t}$, where $r = (I - P_Z) \ket{\psi_0}$. On the level of circuits, we obtain
\begin{align*}
U_Z(t) \ket{\psi_0} - r & = S \ket{\widehat{\psi}_t} = S \exp(-it \widehat{H}_Z) S^\dagger \ket{\psi_0} = S \hat{U}_{Z}(t) S^\dagger \ket{\psi_0} ,
\end{align*}
where $U_Z(t) = \exp(-it H_Z)$ and $\hat{U}_{Z}(t) = \exp(-it \widehat{H}_Z)$. Notice that $U_Z(t)$ is the original Zeno circuit, while $\hat{U}_Z(t)$ is the reduced Zeno circuit.
\end{theorem}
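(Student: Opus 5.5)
The plan is to show directly that the candidate $\ket{\phi_t} := r + S\ket{\widehat{\psi}_t}$ solves the original Zeno equation with the correct initial value, and then to invoke uniqueness of solutions of linear ODEs.

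\emph{Initial value.} Using $P_Z = S S^\dagger$ and $\ket{\widehat{\psi}_0} = S^\dagger\ket{\psi_0}$, we get
\[
\ket{\phi_0} = (I-P_Z)\ket{\psi_0} + S S^\dagger \ket{\psi_0} = \ket{\psi_0}.
\]

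\emph{Dynamics.} The key identity is $H_Z r = 0$. Indeed, $P_Z(I-P_Z) = P_Z - P_Z^2 = 0$ because $P_Z^2 = S(S^\dagger S)S^\dagger = SS^\dagger = P_Z$. Hence $H_Z r = P_Z H P_Z (I-P_Z)\ket{\psi_0} = 0$. Next, $H_Z S = S S^\dagger H S S^\dagger S = S \widehat{H}_Z$, again using $S^\dagger S = I$. Since $r$ is constant in $t$,
\[
\partial_t \ket{\phi_t} = S\,\partial_t\ket{\widehat{\psi}_t} = -i S\widehat{H}_Z\ket{\widehat{\psi}_t} = -i H_Z S\ket{\widehat{\psi}_t} = -i H_Z\bigl(S\ket{\widehat{\psi}_t} + r\bigr) = -iH_Z\ket{\phi_t}.
\]
The linear ODE $\partial_t \ket{\psi_t} = -iH_Z\ket{\psi_t}$ has a unique solution, namely $\exp(-itH_Z)\ket{\psi_0}$, as recalled in the Schr\"odinger section. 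It follows that $\ket{\psi_t} = \ket{\phi_t}$, which is the first claim.

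\emph{Circuit form.} This follows by writing $\ket{\psi_t} = U_Z(t)\ket{\psi_0}$ and $\ket{\widehat{\psi}_t} = \exp(-it\widehat{H}_Z)\ket{\widehat{\psi}_0} = \hat{U}_Z(t)S^\dagger\ket{\psi_0}$. Alternatively, one can verify it without ODEs via the intertwining relation $H_Z^k S = S\widehat{H}_Z^k$, obtained by induction from $H_Z S = S\widehat{H}_Z$, together with $H_Z^k r = 0$ for $k\ge1$. Summing the exponential series then gives
\[
\exp(-itH_Z) = \exp(-itH_Z)(P_Z + (I-P_Z)) = S\hat{U}_Z(t)S^\dagger + (I-P_Z).
\]

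There is no real obstacle here. The only point needing care is to use $S^\dagger S = I$ rather than $SS^\dagger = I$, since the latter is false, and to note that $r$ is annihilated by $H_Z$ and hence stays constant. A minor remark is that $\widehat{H}_Z = S^\dagger H S = S^\dagger H_Z S$, so the reduced Hamiltonian is Hermitian and $\hat{U}_Z(t)$ is unitary.
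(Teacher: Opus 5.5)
Your proof is correct, but it runs in the opposite direction from the paper's. The paper starts from the full trajectory and defines $\ket{\widehat{\psi}_t} := S^\dagger\ket{\psi_t}$. It uses $S^\dagger P_Z = S^\dagger$ to show that this projected state obeys the reduced equation, so uniqueness is invoked in the $d$-dimensional space. It then needs a separate power-series argument, based on $(I-P_Z)H_Z^k = 0$, to show that the complementary component $(I-P_Z)\ket{\psi_t}$ stays equal to $r$.

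You instead lift the reduced solution. Using the intertwining $H_Z S = S\widehat{H}_Z$ together with $H_Z r = 0$, you check that $r + S\ket{\widehat{\psi}_t}$ solves the full Zeno equation with the right initial value, and you invoke uniqueness in the full space. This treats both components at once, so no separate argument for the frozen remainder is needed. The paper's direction has its own conceptual advantage: it exhibits $S^\dagger\ket{\psi_t}$ as a reduced variable that evolves autonomously, which is the model-reduction reading the paper emphasizes.

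Your power-series variant also gives something extra: the operator identity $U_Z(t) = S\hat{U}_Z(t)S^\dagger + (I-P_Z)$, valid for every input state. Since $S^\dagger S = I$, $S^\dagger(I-P_Z) = 0$ and $(I-P_Z)S = 0$, any product of such operators collapses to $S\,\hat{U}_{k,Z}(t_k)\cdots\hat{U}_{1,Z}(t_1)\,S^\dagger + (I-P_Z)$. That is exactly the multi-stage identity~(\ref{eq:cor}), which the paper only justifies by a brief appeal to applying the theorem iteratively.
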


Intuitively, the remainder of $\ket{\psi_0}$ outside the Zeno space, $r = (I - P_Z) \ket{\psi_0}$, is not altered by the Zeno dynamics, while the part in the Zeno space, $P_Z \ket{\psi_0}$, gets propagated by $H_Z$ and equals $S |\widehat{\psi}_t \rangle$, i.e., can be obtained by solving the reduced Zeno equation described by $\widehat{H}_Z$. The case of QAOA that considers alternating Hamiltonians, with a common Zeno space, can be obtained by applying the above result iteratively and recalling $S^\dagger S = I$. Specifically, one obtains 
\begin{align}\label{eq:cor}
U_{k,Z}(t_k) \cdots U_{1,Z}(t_1) \ket{\psi_0} - r & =  S \Big( \hat{U}_{k,Z}(t_k) \cdots \hat{U}_{1,Z}(t_1) \Big) S^\dagger \ket{\psi_0} .
\end{align}
Intuitively, the above equation goes first from the original space to the reduced space via $S^\dagger$, then we perform a calculation using the reduced circuits $\hat{U}_{k,Z}(t_k) \cdots \hat{U}_{1,Z}(t_1)$, and lift at the end the final result from the reduced space to the original space via $S$.

\begin{example}
Consider the Zeno Hamiltonians from the last example 
\[
H_{P,Z} = P_Z H_P P_Z = 
\begin{pmatrix}
1 & 0   & 0   & 0 \\
0   & 2 & 0   & 0 \\
0   & 0   & 2 & 0 \\
0   & 0   & 0   & 0
\end{pmatrix} 
\quad
\text{and}
\quad
H_{B,Z} = P_Z H_B P_Z = 
\begin{pmatrix}
0 & 1 & 1 & 0 \\
1 & 0 & 0 & 0 \\
1 & 0 & 0 & 0 \\
0 & 0 & 0 & 0
\end{pmatrix}
\]
With $S = (e_1, e_2, e_3, 0)$ and $P_Z = SS^\dagger$, the respective reduced Zeno Hamiltonians are
\[
\hat{H}_{P,Z} = S^\dagger H_{P} S = 
\begin{pmatrix}
1 & 0   & 0   \\
0   & 2 & 0   \\
0   & 0   & 2 \\
\end{pmatrix} 
\quad
\text{and}
\quad
\hat{H}_{B,Z} = S^\dagger H_{B} S = 
\begin{pmatrix}
0 & 1 & 1 \\
1 & 0 & 0 \\
1 & 0 & 0 \\
\end{pmatrix} 
\]
That is, we reduce the dimension from $4$ to $3$. For $\ket{\psi_0} = (\ket{00} + \ket{01} + \ket{11}) / \sqrt{3}$, one then obtains
\[
\ket{\widehat{\psi}_0} = S^\dagger \big( \tfrac{1}{\sqrt{3}},\tfrac{1}{\sqrt{3}},0,\tfrac{1}{\sqrt{3}})^T = (\tfrac{1}{\sqrt{3}},\tfrac{1}{\sqrt{3}},0 \big)^T
\]
In the case of the diagonal matrix $\hat{H}_{B,Z}$, one can readily infer that the solution of $\partial_t \ket{\widehat{\psi}_t} = \hat{H}_{B,Z} \ket{\widehat{\psi}_t}$ is given by $\ket{\widehat{\psi}_t} = \big (\tfrac{1}{\sqrt{3}} e^{-it}, \tfrac{1}{\sqrt{3}} e^{-2it}, 0 \big)^T$.
\end{example}

\subsection{Efficient Simulation of Constrained Quantum Optimization}\label{sec:qaoa:red}

We next apply quantum Zeno reduction to speed-up the simulation of constrained quantum optimization on classical machines.

\emph{Simulation costs of quantum optimization.} Recalling Definition~\ref{def:qaoa}, we observe that a simulation of QAOA on a classical computer requires to determine the value of $\nu = \sum_{i=1}^\kappa (k_i + l_i)$ matrix-vector multiplications
\begin{equation}
        \ket{w_\kappa} = U_B(\delta)^{k_\kappa} U_P(\delta)^{l_\kappa} \cdot \ldots \cdot U_B(\delta)^{k_1} U_P(\delta)^{l_1} \ket{\psi_0}
\end{equation}    
Unitary matrices $U_B(\delta)$ and $U_P(\delta)$ are the solutions of the respective Schrödinger equations~(\ref{eq:schroedinger}) over time interval $[0;\delta]$, involving $2^n \times 2^n$ Hamiltonian matrices. Overall, a direct implementation comes with the following simulation costs:
\begin{itemize}
    \item The $\nu$ matrix-vector multiplications cost $\calO(\nu 2^{2 n})$.
    \item The computation of $U(\delta)$ costs $\calO(s 2^{2 n})$, where $s = \lceil \delta / \tau \rceil$ and $0 < \tau \leq \delta$ is, respectively, the number and the length of discretization steps used to approximate the solution of Schrödinger's equation via a numerical solver~\cite{BlumCuckerShubSmale1998}.
\end{itemize}

\emph{Simulation costs of constrained quantum optimization.} We next consider the case where QAOA is simulated in presence of constraints, allowing us to apply the quantum Zeno reduction result from Theorem~\ref{thm:red:zeno}. In this case, the $2^n \times 2^n$ matrices $U_P(\delta)$ and $U_B(\delta)$ are replaced, respectively, with the $d \times d$ matrices $U_{P,Z}(\delta)$ and $U_{B,Z}(\delta)$. Here, $d \leq 2^n$ is the dimension of the (safe) Zeno subspace $Z$, hopefully substantially smaller than $2^n$. 

The following result can be shown.

\begin{theorem}\label{thm:qaoa:red}
Using the notation from Theorem~\ref{thm:red:zeno} and~(\ref{eq:cor}), the following holds.    
\begin{enumerate}
    \item If $U_{P,Z}(\delta)$, $U_{B,Z}(\delta)$, $S$ and $\ket{\widehat{\psi}_0} = S^\dagger \ket{\psi_0}$  from~(\ref{eq:cor}) are known, the constrained quantum optimization computes in $\calO(\nu d^2)$, where $\nu = \sum_{i=1}^\kappa (k_i + l_i)$ is the number of matrix-vector multiplications in
    \begin{equation*}
            \ket{w_\kappa} = S \big( U_{B,Z}(\delta)^{k_\kappa} U_{P,Z}(\delta)^{l_\kappa} \cdot \ldots \cdot U_{B,Z}(\delta)^{k_1} U_{P,Z}(\delta)^{l_1} \big) S^\dagger \ket{\psi_0}
    \end{equation*}        
    
    \item Assuming that $H_Z$ is available, the computation of $U_{Z}(\delta)$ costs $\calO(s d^{2})$, where $s = \delta / \tau$ and $0 < \tau \leq \delta$ is, respectively, the number and the length of discretization steps used to approximate the solution of Schrödinger's equation.    

    \item Assuming that $S$ is available, the computation of the reduced Hamiltonian
    \[
    H_Z = S^\dagger H S
    \]
    can be done in $\calO(d 2^{2n})$. In the special case where $S$ consists of standard basis vectors, the cost is $\calO(d^2)$, provided that $H$ is available or that each of its entries can be computed independently from the others in constant time. 
\end{enumerate}
\end{theorem}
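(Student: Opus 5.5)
Each of the three claims is a cost count once the reduction of Theorem~\ref{thm:red:zeno} and its iterated form~(\ref{eq:cor}) are in place, so I would go through them in order.

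For the first claim, I would read the matrices $U_{P,Z}(\delta)$ and $U_{B,Z}(\delta)$ as the reduced $d\times d$ unitaries $\hat U_{P,Z}(\delta)=\exp(-i\delta \widehat H_{P,Z})$ and $\hat U_{B,Z}(\delta)=\exp(-i\delta \widehat H_{B,Z})$, which is how the surrounding text uses them. By~(\ref{eq:cor}) the constrained QAOA state equals $S$ applied to the product of these reduced unitaries acting on $\ket{\widehat\psi_0}$, up to the remainder $r$. Since $\ket{\widehat\psi_0}$ is given, I would evaluate the product from right to left as a sequence of matrix-vector products rather than forming matrix powers. Each product with a $d\times d$ matrix costs $\calO(d^2)$, and there are exactly $\nu=\sum_i (k_i+l_i)$ of them, giving $\calO(\nu d^2)$.

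The final lift by $S$ costs $\calO(2^n d)$ and needs separate treatment. There are three ways to handle it. If $S$ consists of standard basis vectors, lifting is just placing $d$ entries. Alternatively, one may report the reduced vector itself, since expectations such as $\bra{w}H_P\ket{w}$ can be computed in reduced coordinates. Otherwise the lift is a one-off additive cost. I would state this caveat explicitly rather than hide it.

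For the second claim, I would compute $\hat U_Z(\delta)$ by numerically integrating the reduced Schrödinger equation $\partial_t \ket{\widehat\psi_t}=-i\widehat H_Z\ket{\widehat\psi_t}$ on $[0;\delta]$ with a fixed-step one-step scheme such as Euler, Runge--Kutta or a truncated Taylor series, using $s=\delta/\tau$ steps. Each step consists of a constant number of products of a vector with $\widehat H_Z$, which costs $\calO(d^2)$, so each column of the solution costs $\calO(s d^2)$. This matches the counting convention of the unreduced bound $\calO(s2^{2n})$ quoted above, so the claim follows by replacing $2^n$ with $d$. I need to be careful here: the paper's full-space bound evidently counts propagation of a vector, not all $d$ columns. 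I would therefore adopt the same convention and state the bound as the cost per propagated state, matching the way the unitary is actually used in part~1.

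For the third claim, I would compute $S^\dagger H S$ in two stages. First form $HS$, which is a $2^n\times 2^n$ matrix times a $2^n\times d$ matrix and costs $\calO(d2^{2n})$. Then form $S^\dagger(HS)$, which costs $\calO(d^2 2^n)\subseteq\calO(d2^{2n})$ since $d\le 2^n$.

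In the special case $S=(e_{j_1},\dots,e_{j_d})$, the entry $(a,b)$ of $S^\dagger H S$ is simply $e_{j_a}^\dagger H e_{j_b}=H_{j_a j_b}$. The reduced Hamiltonian is therefore a principal submatrix of $H$ and is obtained by $d^2$ constant-time lookups or entry evaluations, giving $\calO(d^2)$.

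The main obstacle is not mathematical depth but getting the cost conventions consistent. In part~1, the lift by $S$ and the remainder $r$ must either be excluded or argued cheap. In part~2, one must fix whether computing $U_Z(\delta)$ means propagating a vector or the full matrix. I would resolve both by following the conventions of the unreduced cost discussion preceding the theorem.
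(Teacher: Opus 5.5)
Your proposal follows essentially the same route as the paper's proof. Parts 1 and 2 come from counting $\nu$ (respectively $s$ per-step) matrix-vector products in dimension $d$, as in the unreduced cost discussion with $2^n$ replaced by $d$; part 3 comes from forming $HS$ via $d$ matrix-vector products with $H$, or, when $S$ consists of standard basis vectors, by reading off the principal submatrix $(H_{i_k,i_l})_{k,l}$. Your explicit caveats are correct and are points the paper's terse proof leaves implicit: the $\calO(2^n d)$ cost of the final lift by $S$, the reading of the $\calO(sd^2)$ bound as propagating a vector rather than assembling all $d$ columns of $U_Z(\delta)$, and the $\calO(d^2 2^n)$ cost of $S^\dagger(HS)$ being absorbed because $d\le 2^n$.
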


We next comment on the theorem. Statement 1. and 2. ensure that constrained quantum optimization can be obtained by working in the smaller Zeno space $Z$ of dimension $d \leq 2^n$. Both statement 1. and 2. work under the assumption that the reduced Hamiltonians $H_{P,Z}$ and $H_{B,Z}$ are available. Their computation is addressed in statement 3. which ensures that they can be computed in $\calO(d 2^{2n})$ steps. Statement 3. is a no free lunch result. Indeed, in order to bring constrained quantum optimization to the smaller space $Z$, we need to compute the reduced Hamiltonian $H_Z = S^\dagger H S$ which involves original Hamiltonian $H$ of size $2^n \times 2^n$. The respective computational cost of $\calO(d 2^{2n})$ allows for speed-ups when $\min\{\nu,s\} \gg d$. 

We notice that the above result ensures substantial speeds when $d \ll 2^n$ and $S$ consists of standard basis vectors only. Such a case will be discussed in Section~\ref{sec:eval}. We finish by noting that the availability of matrix $S$ corresponds to the availability of the (safe) Zeno space which, in turn, is assumed to be given in constrained optimization.

\section{Evaluation}\label{sec:eval}


In the following, we shall consider 3-SAT instances over $n$ variables and $m$ clauses, that is
\[
\Phi(x) = C_1(x) \land C_2(x) \land \ldots \land C_m(x) , \text{ with } x \in \{0,1\}^n
\]
For each $\Phi(x)$, we then consider as constraint the sub formula
\[
\Phi_k(x) = C_1(x) \land C_2(x) \land \ldots \land C_k(x) , \text{ with } x \in \{0,1\}^n \text{ and } k < m .
\]

All experiments reported below terminated within one second and were conducted on a 8-core machine with 2.7\,GHz and 16\,GB of RAM.

\setlength{\tabcolsep}{6pt}
\begin{table*}[t]
\centering
\begin{tabular}{@{}r r rrr rrr @{}}
\toprule
& & \multicolumn{3}{c}{\textbf{$m = 4.0n$}}
& \multicolumn{3}{c}{\textbf{$m = 5.0n$}} \\
\cmidrule(lr){3-5}\cmidrule(lr){6-8}
$n$ & $2^n$ & $k=0.7m$ & $k=0.8m$ & $k=0.9m$
& $k=0.7m$ & $k=0.8m$ & $k=0.9m$ \\
\midrule
10  & 1024  & 29.03  & 15.32  & 8.36    & 9.59    & 4.50  & 2.53 \\
11  & 2048  & 39.82  & 21.17  & 11.23   & 12.11   & 6.27  & 3.91 \\
12  & 4096  & 46.50  & 25.51  & 13.94   & 15.81   & 6.90  & 3.14 \\
13  & 8192  & 65.66  & 27.77  & 13.21   & 24.06   & 8.32  & 3.05 \\
14  & 16384 & 101.01 & 46.40  & 19.42   & 18.74   & 8.46  & 3.83 \\
15  & 32768 & 112.89 & 50.04  & 22.38   & 29.19   & 10.74  & 4.63 \\
\bottomrule
\end{tabular}
\caption{Random 3-SAT: Original search space dimension, $2^n$, and the average reduced search space dimension. Each average was computed across $100$ random 3-SAT instances $\Phi$ having $n$ boolean variables and $m$ clauses, with constraints $\Phi_k$ given by the first $k$ clauses of $\Phi$. 
}\label{tab:random:sat}
\end{table*}

\emph{{3-SAT benchmark.}} As first benchmark, we consider random 3-SAT formulas consisting of conjunctions of $m$ clauses which, in turn, consist of there randomly picked literals, i.e., boolean variables $x_i$ whose indices were picked uniformly from $0,\ldots,n-1$ and that were negated with probability of $\tfrac{1}{2}$. The respective results are reported in Table~\ref{tab:random:sat} and demonstrate that we can obtain substantial reductions of the search spaces. The reduced search spaces are given by the spans of satisfying boolean assignments of $\Phi_k$, that is 
\[
\mathrm{span} \{ \ket{x} \mid \Phi_k(x) = 1 \} .
\]
The satisfying boolean assignments were computed by a Python implementation of \#DPLL~\cite{BacchusDalmaoPitassi2003}.

\emph{{Agent cooperation benchmark.}} As second benchmark, we model coordination constraints among cooperating agents via a family of 3-SAT formulas derived from random graphs. For $n$, we generated $100$ Erd\H{o}s--R\'enyi graph with $n$ locations. We associated a Boolean variable $x_i$ at node $i \in V$, where $x_i = 1$ indicates that a agent is exploring at location $i$, while $x_i = 0$ means that the agent acts as passive relay. For every node $j$ and every pair of its neighbors $i,k \in N(j)$, we introduced then the clauses
\[
(\neg x_i \lor \neg x_j \lor \neg x_k) \;\land\; (x_i \lor x_j \lor x_k),
\]
which jointly enforce that any three neighboring agents $i,j,k$ are not all in the same state. Intuitively, this prevents local congestion (all exploring) as well as inactivity (all relaying) within two-hop neighborhoods. The resulting formula is an instance of the not-all-equal 3-SAT problem, where each clause requires that not all literals take the same truth value. It is well to be NP-complete~\cite{Schaefer1978}, and thus our construction yields a natural and computationally challenging benchmark for evaluating coordination under local interaction constraints. The results are reported in Table~\ref{tab:robots} and demonstrate, akin to random 3-SAT, that we can obtain substantial reductions of the search spaces.

\setlength{\tabcolsep}{6pt}
\begin{table*}[t]
\centering
\begin{tabular}{@{}r r rrr rrr @{}}
\toprule
& & \multicolumn{3}{c}{\textbf{$p = 0.3$}}
& \multicolumn{3}{c}{\textbf{$p = 0.4$}} \\
\cmidrule(lr){3-5}\cmidrule(lr){6-8}
$n$ & $2^n$ & $k=0.7m$ & $k=0.8m$ & $k=0.9m$
& $k=0.7m$ & $k=0.8m$ & $k=0.9m$ \\
\midrule
10  & 1024  & 103.41  & 86.09  & 47.44   & 57.05    & 25.86  & 4.77 \\
11  & 2048  & 144.29  & 72.85  & 29.37   & 41.84    & 16.22  & 3.97 \\
12  & 4096  & 157.86  & 67.54  & 29.01   & 36.06    & 8.10   & 2.47 \\
13  & 8192  & 149.68  & 46.80   & 15.36   & 36.62    & 6.01   & 1.15 \\
14  & 16384 & 142.17  & 36.14  & 9.13    & 32.79    & 5.94   & 0.14 \\
15  & 32768 & 121.83  & 27.78  & 5.97    & 33.69    & 2.13   & 0.00 \\
\bottomrule
\end{tabular}
\caption{Agent coordination problem: Original search space dimension, $2^n$, and the average reduced search space dimension. Every average was computed across $100$ random graph instances with $n$ nodes and connection probability $p$. Each graph gave rise to a swarm coordination problem, formally given by 3-SAT instance $\Phi$ with $n$ boolean variables and $m$ clauses. The constraints $\Phi_k$ were given by the first $k$ clauses of $\Phi$.
}\label{tab:robots}
\end{table*}

\emph{{Simulation complexity.}} Applying results from Section~\ref{sec:qaoa:red}, we obtain:
\begin{itemize}
    \item In case of 3-SAT, quantum optimization can be simulated in $\calO((\nu + s) 2^{2 n})$ steps on a classical machine, where $\nu$ is the number of stages done by quantum optimization, while $s$ is the number of time steps done during numerical integration of Schrödinger's equation.
    \item If quantum Zeno reduction is applied to a 3-SAT formula $\Phi$, the simulation costs of quantum optimization can be reduced to $\calO((\nu + s) d^2) + \calO(d^2)$, where $d$ is the number of boolean assignments satisfying the constraint $\Phi_k$.
\end{itemize}

The second point follows from Theorem~\ref{thm:qaoa:red} because the Zeno space $Z$ is spanned by the standard basis vectors $e_{d + 1}$ where $\Phi_k(x_d) = 1$ and $x_d \in \{0,1\}^n$ is the binary representation of $0 \leq d \leq 2^n - 1$. Specifically, we work under the assumption that the following common~\cite{farhi2000quantum,farhi2014quantum} Hamiltonian $H_B$ is used. As stated next, its entries can be computed in constant time. 

\begin{lemma}\label{lem:hb:formula}
Let $U_B(\delta) = \exp(-i \delta H_B)$ where $H_B=\sum_{j=1}^{n} X_j$ act on $n$ bits, with $X_j$ flipping bit $j$ and leaving all other bits untouched. Then, the entry of $U_B(\delta)$ at row $k+1$ and column $l+1$ is given by
\[
U_B(\delta)_{k+1,l+1}
=
(\cos \delta)^{\,n-d(k,l)}(-i\sin \delta)^{\,d(k,l)},
\]
where $d(k,l)$ denotes the Hamming distance between the binary representations of integers $k, l \in \{0,\ldots,2^n-1\}$.
\end{lemma}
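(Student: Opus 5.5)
The plan is to exploit that the summands of $H_B$ commute and act on distinct tensor factors. Write $X_j = I^{\otimes (n-j)} \otimes X \otimes I^{\otimes (j-1)}$, where the exact position is immaterial. For $j \neq j'$ the operators $X_j$ and $X_{j'}$ act nontrivially on different factors, so they commute. The matrix exponential of a sum of pairwise commuting matrices is the product of the exponentials. Hence
\[
U_B(\delta) = \exp\Big(-i\delta \sum_{j=1}^n X_j\Big) = \prod_{j=1}^n \exp(-i\delta X_j) .
\]
Moreover $\exp(-i\delta X_j) = I^{\otimes(n-j)} \otimes \exp(-i\delta X) \otimes I^{\otimes(j-1)}$, which can be seen from the power series because $(I\otimes X\otimes I)^k = I \otimes X^k \otimes I$. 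Multiplying these factors using the mixed-product rule $(A\otimes B)(C\otimes D) = AC\otimes BD$ then gives $U_B(\delta) = \exp(-i\delta X)^{\otimes n}$.

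Next I would compute the single-qubit exponential exactly as in the Hadamard example of Section~2. Since $X^2 = I$, splitting the power series into even and odd terms yields
\[
\exp(-i\delta X) = \cos(\delta) I - i \sin(\delta) X = \begin{pmatrix} \cos\delta & -i\sin\delta \\ -i\sin\delta & \cos\delta \end{pmatrix}.
\]
In the notation $\ket{b}$ for $b\in\{0,1\}$, its entry $\bra{a}\exp(-i\delta X)\ket{b}$ equals $\cos\delta$ if $a=b$ and $-i\sin\delta$ if $a\neq b$.

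Finally I would read off the entries of the tensor power. With the convention $\ket{x} = e_{x+1}$ from the SAT example, the entry at row $k+1$ and column $l+1$ is $\bra{k} U_B(\delta) \ket{l}$. Write $k = k_{n}\cdots k_1$ and $l = l_n \cdots l_1$ in binary, so that $\ket{k} = \ket{k_n}\otimes\cdots\otimes\ket{k_1}$. The inner product of tensor products factorises:
\[
\bra{k}\exp(-i\delta X)^{\otimes n}\ket{l} = \prod_{j=1}^n \bra{k_j}\exp(-i\delta X)\ket{l_j}.
\]
Each factor equals $\cos\delta$ when $k_j = l_j$ and $-i\sin\delta$ when $k_j \neq l_j$. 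The number of positions with $k_j\neq l_j$ is by definition the Hamming distance $d(k,l)$, so the product is $(\cos\delta)^{n-d(k,l)}(-i\sin\delta)^{d(k,l)}$, as claimed.

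I do not expect a real obstacle. The only step requiring care is the justification that $\exp$ of the commuting sum factorises and that it equals the $n$-fold tensor power. The ordering of bits relative to the indexing $e_{x+1}$ is irrelevant, because the formula is symmetric in the positions.
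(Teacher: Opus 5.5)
Your proof is correct and follows the paper's argument: use commutativity of the $X_j$ to write $U_B(\delta)=\exp(-i\delta X)^{\otimes n}$, compute $\exp(-i\delta X)=\cos\delta\, I - i\sin\delta\, X$ from $X^2=I$, and factor each matrix entry over bit positions, counting the mismatches as $d(k,l)$. You additionally justify the tensor-power identity via the power series and the mixed-product rule, and you correctly write the entrywise factorization as a product where the paper writes $\bigotimes$.
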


We note that $H_B$ generalizes the Hamiltonian~(\ref{eq:HB:small}) to an arbitrary number of qubits. As in the special case of $n = 2$, one can show that $U_B(\delta)$ diagonalizes over the basis induced by $\ket{+}$ and $\ket{-}$.

\emph{{Discussion.}} The above discussion implies that one can simulate quantum optimization on classical computers efficiently if $d \ll 2^n$ and if we know the $d$ satisfying boolean assignments of constraint $\Phi_k$. Notice that the latter is an application of \#SAT to $\Phi_k$, i.e., a hard problem itself~\cite{Schaefer1978}. This said, we point out that the current work focuses on the efficient simulation of quantum optimization. Specifically, while currently known algorithms for \#SAT and BQP are both exponential in the number of boolean variables and qubits, respectively, the former usually admit a much milder exponential growth, allowing one to cover sometimes instances with millions of boolean variables, see~\cite{EEN2003,DBLP:conf/cav/InversoT0TP14} and references therein. The current work thus proposes to harness the advances in SAT solving akin to~\cite{DBLP:conf/cav/MeiBL24} to speed-up the simulation of constrained quantum optimization on classical machines.

\section{Conclusion}

We introduced a projection-based reduction technique for simulation of constrained quantum optimization on classical machines. We demonstrated that exponential state-space reductions are possible by applying the approach to a collaborative systems of agents, where feasibility encodes coordination requirements among agents. To this end, we harnessed efficient algorithms for \#SAT which, while being computationally hard themselves, scale beyond currently available approaches in quantum circuit simulation. Future work will use the reported reductions to speed-up constrained quantum optimization.

\appendix

\section*{Appendix with Mathematical Proofs}

\begin{proof}[Theorem~\ref{thm:red:zeno}]
Decompose the initial state as 
\[
\ket{\psi_0} = P_Z \ket{\psi_0} + (I - P_Z)\ket{\psi_0}
\]
and set 
\[
r := (I - P_Z)\ket{\psi_0} \in Z^\perp.
\]
Since $H_Z = P_Z H P_Z$ satisfies $H_Z (I - P_Z) = 0$, we have $H_Z r = 0$, hence $U_Z(t) r = e^{-it H_Z} r = r$, so the component outside $Z$ is invariant under the Zeno evolution. Next define the reduced state $\ket{\widehat{\psi}_t} := S^\dagger \ket{\psi_t}$. Using $P_Z = S S^\dagger$ and $S^\dagger S = I$, we have
\[
P_Z \ket{\psi_t} = S S^\dagger \ket{\psi_t} = S \ket{\widehat{\psi}_t}.
\]
With this, we obtain 
\begin{multline*}
\partial_t \ket{\widehat{\psi}_t}
= S^\dagger \partial_t \ket{\psi_t} 
= -i S^\dagger P_Z H P_Z \ket{\psi_t} \\
= -i S^\dagger H (P_Z \ket{\psi_t}) 
= -i S^\dagger H S \ket{\widehat{\psi}_t} 
= -i \widehat{H}_Z \ket{\widehat{\psi}_t},
\end{multline*}
with initial condition $\ket{\widehat{\psi}_0} = S^\dagger \ket{\psi_0}$. Therefore $\ket{\widehat{\psi}_t} = e^{-it \widehat{H}_Z} \ket{\widehat{\psi}_0}$ and lifting back yields
$
P_Z \ket{\psi_t}
= S \ket{\widehat{\psi}_t}
= S e^{-it \widehat{H}_Z} S^\dagger \ket{\psi_0}
$. 
Finally,
\[
\ket{\psi_t}
= P_Z \ket{\psi_t} + (I - P_Z)\ket{\psi_t}
= S \ket{\widehat{\psi}_t} + r,
\]
since
\[
(I - P_Z)\ket{\psi_t}
= (I - P_Z) U_Z(t) \ket{\psi_0}
= (I - P_Z)\ket{\psi_0}
= r.
\]
To see the second identity, note that  
\[
(I - P_Z)e^{-itH_Z} = (I - P_Z) \Big(I + \sum_{k \ge1 }\frac{(-it)^k}{k!}H_Z^k \Big) = (I - P_Z)
\]
because $(I - P_Z) H_Z^k = 0$ for all $k \geq 1$.
\end{proof}

\begin{proof}[Theorem~\ref{thm:qaoa:red}]
Statement 1. and 2. follow from Theorem~\ref{thm:red:zeno} and the discussion proceeding Theorem~\ref{thm:qaoa:red}. Instead, the statement 3. follows by noting that $H_Z = S^\dagger H S$ for the $2^n \times d$ matrix $S$. Hence, $H_Z$ can be computed by means of $d$ matrix-vector multiplications in $2^n$, given the first complexity bound. Instead, if $S$ consists of standard base vectors only, say $\{e_1,\ldots,e_d\}$, then $H_Z$ arises from $H$ by keeping only the first $d$ columns and first $d$ rows of $H$. In general, with $I = \{i_1,\ldots,i_d\}$ being the indices of standard base vectors appearing in $S$, one can obtain the $d^2$ entries $(H_Z)_{k,l} = H_{i_k,i_l}$ by querying $H$ or by direct calculation of $H_{i_k,i_l}$.
\end{proof}

\begin{proof}[Lemma~\ref{lem:hb:formula}]
Since the operators $X_j$ act on distinct qubits, they commute. Hence
\[
U_B(\delta)=e^{-i\delta\sum_{j=1}^n X_j} =\bigotimes_{j=1}^n e^{-i\delta X}.
\]
Let $k=(k_1,\ldots,k_n)$ and $\ell=(\ell_1,\ldots,\ell_n)$. From tensor-product rules, we obtain
\[
\bra{k}U_B(\delta)\ket{\ell}
=
\bigotimes_{j=1}^n \bra{k_j}e^{-i\delta X}\ket{\ell_j} ,
\]
where $\bra{k} = \ket{k}^\dagger$. Using $X^2=I$, the Taylor expansion of $e^{-i\delta X}$ yields 
\begin{align*}
e^{-i\delta X} & = \sum_{m=0}^{\infty} \frac{(-i\delta X)^m}{m!} \\
& = I - i\delta X + \frac{(-i\delta)^2}{2!}X^2
  + \frac{(-i\delta)^3}{3!}X^3
  + \frac{(-i\delta)^4}{4!}X^4 + \cdots \\
& = I \cos\delta - X i\sin\delta \\
& =
\begin{pmatrix}
\cos\delta & -i\sin\delta\\
-i\sin\delta & \cos\delta
\end{pmatrix}.
\end{align*}
Thus, for $a,b\in\{0,1\}$, we obtain
\[
\bra{a}e^{-i\delta X}\ket{b}
=
\begin{cases}
\cos\delta & a=b,\\
-i\sin\delta & a\neq b .
\end{cases}
\]
Now, among the $n$ positions, $n-d(k,\ell)$ satisfy $k_j=\ell_j$ and $d(k,\ell)$ satisfy $k_j\neq\ell_j$. This yields the claim. 
\end{proof}

\bibliographystyle{splncs04}
\bibliography{biblio}

\end{document}